\numberwithin{equation}{section}
\newtheorem{theorem}{Theorem}[section]
\newtheorem{proposition}[theorem]{Proposition}
\newtheorem{defin}[theorem]{Definition}
\newenvironment{proof}{\noindent \textbf{Proof: }}{\hfill
$\Box$  \vspace{1ex}}
\newenvironment{definition}{\begin{defin}\em}{\end{defin}}
\newtheorem{defins}[theorem]{Definitions}
\newtheorem{exs}[theorem]{Examples}
\newtheorem{ex}[theorem]{Example}
\newtheorem{rem}[theorem]{Remark}
\newtheorem{rems}[theorem]{Remarks}
\def\balpha{\boldsymbol{\alpha}}
\def\fq{\mathbb{F}_q}
\def\N_0{\mathbb{N}_0}
\def\RM{\textrm{RM}}
\def\prm{\textrm{PRM}}
\begin{document}


\begin{center}
{\Large\textbf{On the next-to-minimal weight of projective Reed-Muller codes}}
\end{center}
\vspace{3ex}

\noindent
\begin{center} 
\textsc{C\'{\i}cero  Carvalho and Victor G.L.\ Neumann}\footnote{\noindent 
Authors' emails: cicero@ufu.br and gonzalo@famat.ufu.br. Both authors were 
partially supported by  grants from CNPq and FAPEMIG.} 
\end{center}

\noindent
{\small Faculdade de Matem\'atica,  Universidade Federal de Uberl\^andia,    Av.\ J.\ N.\ \'Avila 2121, 38.408-902 \ \ Uberl\^andia - MG, Brazil}
\vspace{3ex}

\noindent
{\footnotesize \textbf{Abstract.} 
In this paper we present several values for the next-to-minimal weights of 
projective Reed-Muller codes. We work over $\mathbb{F}_q$ with $q \geq 3$ since 
in \cite{carv-neu-ieee} we have determined the complete values for the 
next-to-minimal weights of binary projective Reed-Muller codes. As in 
\cite{carv-neu-ieee} here we also find examples of codewords with 
next-to-minimal weight whose set of zeros is not in a hyperplane arrangement.}
\vspace{2ex}

\section{Introduction} \label{intro}

Reed-Muller codes were introduced in 1954 by 
D.E.\ Muller (\cite{muller}) as codes defined over $\mathbb{F}_2$, and a 
decoding algorithm for them 
was devised by I.S.\ Reed 
(\cite{reed}). In 1968  Kasami, 
Lin, and Peterson (\cite{kasami}) extended the original definition to a finite 
field $\mathbb{F}_q$, where $q$ is any prime power, and named these codes 
``generalized Reed-Muller codes''. They also presented some results on the 
weight 
distribution, the dimension of the codes being determined in later works. In 
coding theory one  is always interested in the values of the higher Hamming 
weights of a code because of their relationship with the code performance, but 
usually this is not a simple problem. For the generalized Reed-Muller codes, 
the complete determination of the second lowest Hamming weight, also called 
next-to-minimal weight,  
was only completed in 2010, when Bruen (\cite{bruen}) observed that the value 
of these weights could be obtained from unpublished results in the Ph.D.\ 
thesis 
of D.\ Erickson (\cite{erickson}) and Bruen's own results from 1992 and 2006. 
Now that Bruen called the attention the Erickson's thesis we know that the 
complete list of the next-to-minimal weights for the generalized Reed-Muller 
codes may be obtained by combining results from Erickson's thesis with results 
by Geil (see \cite{geil}) or with results by  Rolland (see \cite{rolland}).

In 1990 Lachaud introduced the class of projective Reed-Muller codes
(see \cite{lachaud}). The parameters of this codes were determined by Serre 
(\cite{serre}), for some cases, and by S\o rensen (\cite{sorensen}) for the 
general case. As for the determination of the next-to-minimal weight for these 
codes, there are some results (also about higher Hamming 
weights) on this subject by Rodier and Sboui (\cite{rodier-sboui}, 
\cite{rodier-sboui2}, \cite{sboui1}) and also by Ballet and Rolland 
(\cite{rolland-ballet}). Recently the authors of this note completely 
determined  the next-to-minimal weight of projective Reed-Muller codes defined 
over $\mathbb{F}_2$ (see \cite{carv-neu-ieee})
In this paper we present several results on the 
next-to-minimal 
Hamming weights for projective Reed-Muller codes, including the complete 
determination of the next-to-minimal weights for the case of projective 
Reed-Muller codes defined over $\mathbb{F}_3$. In the next section we recall 
the definitions of the 
generalized and projective Reed-Muller codes, and some results of 
geometrical nature that will allow us to determine many cases of higher 
Hamming weights for the projective Reed-Muller codes, which we do in the last 
section. 

\section{Preliminary results}

Let $\fq$ be a finite field and let $I_q = (X_1^q - X_1, \ldots, X_n^q - X_n) 
\subset \fq[X_1, \ldots, X_n]$ be the ideal of polynomials which vanish at all 
points $P_1, \ldots, P_{q^n}$ of the affine space $\mathbb{A}^n(\fq)$. Let 
$\varphi: \fq[X_1, \ldots, X_n]/I_q \rightarrow \fq^{q^n}$ be the $\fq$-linear 
transformation given by $\varphi(g + I_q) = (g(P_1) , \ldots, g(P_{q^n}) )$.

\begin{definition}
Let $d$ be a nonnegative integer.  The generalized Reed-Muller code of order 
$d$ is defined 
as $\RM(n, d) = \{ \varphi(g + I_q) \, | \, g = 0 \textrm{ or } \deg(g) \leq d 
\}$.  
\end{definition}

It is not difficult to prove that $\RM(n, d) = \fq^{q^n}$ if $d \geq n(q - 1)$, 
so in this case 
the minimum distance is 1. Let $d \leq n(q -1)$ and write $d = a (q - 1) + 
b$ with $0 < b \leq q - 1$, then the minimum distance of $\RM(n, d)$ is 
\[
W^{(1)}_{\RM}(n, d) = (q - b) q^{n - a - 1}.
\]
According to \cite{erickson} and \cite{bruen} (see also \cite[Thm.\ 
9]{rolland-ballet})  the next-to-minimal 
 weight  $W^{(2)}_{\RM}(n, d)$ of 
$\RM(n, d)$ is equal to 
\begin{equation} \label{table-2ndwt}
W^{(2)}_{\RM}(n,d) =  W^{(1)}_{\RM}(n, d) + c q^{n-a-2}   = \left(1 + 
\dfrac{c}{(q- b)q}  \right) (q - b)q^{n-a-1} 
\end{equation}
where
\[
c = \left\{ \begin{array}{lcl}  
q &     \textrm{ if } & a = n - 1; \\
b - 1 & \textrm{ if } & a < n - 1 \textrm{ and } 1 < b \leq (q + 
1)/2;            \\
      & \textrm{ or } & a < n - 1 \textrm{ and } b = q - 1  \neq 1;         \\
q      & \textrm{ if } & a = 0 \textrm{ and } b = 1;      \\
q - 1  & \textrm{ if } & q < 4, 0 < a < n - 2, \textrm{ and } b = 1;           
\\
q - 1  & \textrm{ if } & q = 3, 0 < a = n - 2 \textrm{ and } b = 1;          \\
q      & \textrm{ if } & q = 2,    a = n - 2 \textrm{ and } b = 1;           \\
q      & \textrm{ if } & q \geq 4, 0 <  a \leq n - 2 \textrm{ and } b = 
1;           \\
b - 1 & \textrm{ if } & q \geq 4,  a \leq n - 2 \textrm{ and } (q + 1)/2 < b.
\end{array}  \right.
\]

 We will need some specific values 
of $W^{(2)}_{\RM}(n,d)$ in the next section. 	

Let $Q_1,\ldots, Q_N$ be the points of $\mathbb{P}^n(\fq)$, where $N = q^n + 
\ldots + q + 1$. From e.g.\ \cite{idealapp} or 
\cite{mercier-rolland}  we get that the homogeneous ideal $J_q \subset 
\fq[X_0, \ldots 
, X_n]$ of the polynomials which vanish in all points of $\mathbb{P}^n(\fq)$ is 
generated by $\{ X_j^q X_i - X_i^q X_j\, | 
\, 0 \leq i < j \leq n\}$. We denote by $\fq[X_0, \ldots , X_n]_d$
 (respectively, $(J_q)_d$) the $\fq$-vector subspace formed by the 
homogeneous polynomials of degree $d$ (together with the zero polynomial) in  
$\fq[X_0, \ldots , X_n]$ (respectively, $J_q$).

\begin{definition}
Let $d$ be a positive integers and 
let $\psi: \fq[X_0, \ldots , X_n]_d / (J_q)_d\rightarrow \fq^N$ be the 
$\fq$-linear transformation given by $\psi( f + (J_q)_d ) = (f(Q_1) \ldots, 
f(Q_N) )$, where we write the points of $\mathbb{P}^n(\fq)$ in the standard 
notation, i.e.\ the first nonzero entry from the left is equal to 1. The 
projective Reed-Muller code of order $d$, denoted by $\prm(n, d)$, is the image 
of 
$\psi$.
\end{definition}
In \cite{sorensen} S\o rensen  determined all values for  
the minimum distance $W^{(1)}_{\prm}(n, d)$ of $\prm(n, d)$ and proved that 
\[	
W^{(1)}_{\prm}(n, d) = 
W^{(1)}_{\RM}(n, d-1).
\]

One may wonder if there is a similar relation between the next-to-minimal 
weight of 
projective Reed-Muller codes and the  next-to-minimal weight of 
generalized Reed-Muller codes. A hint that this might be true comes from the 
following reasoning.
Let $\omega$ be the Hamming weight of $\varphi(g + I_q)$, where $g \in \fq[X_1, 
\ldots, X_n]$ is a polynomial of degree $d - 1$, and let $g^{(h)}$ be the 
homogenization of $g$ with respect to $X_0$. Then the degree of $g^{(h)}$ is $d 
- 1$ 
and the weight of $\psi(X_0 g^{(h)} + (J_q)_d)$ is $\omega$. This shows that, 
denoting by $W^{(2)}_{\prm}(n, d)$ the next-to-minimal weight of $\prm(n, d)$,  
we 
have 
\begin{equation}\label{2.1}
W^{(2)}_{\prm}(n, d) \leq W^{(2)}_{\RM}(n, d - 1).
\end{equation}

In \cite{carv-neu-ieee} we determined all values for the next-to-minimal 
weights of projective Reed-Muller codes defined over $\mathbb{F}_2$, and from 
those results we get that there are cases for which equality does not hold in 
\eqref{2.1}.
In the next section 
we 
will 
determine several values for the 
next-to-minimal 
weights of projective Reed-Muller codes defined over $\mathbb{F}_q$, with $q 
\geq 3$, and we will also find some cases where equality does not hold in 
\eqref{2.1}. We recall from \cite{carv-neu-ieee} a definition and some 
results that we will need to prove the main results.

\begin{definition}
Let $f \in \fq[X_0, \ldots , X_n]_d$. The set of points of $\mathbb{P}^n(\fq)$ 
which are not zeros of $f$ is called the support of $f$, and we denote its 
cardinality by $| f |$ (hence $|f|$ is the weight of the codeword $\psi(f + 
(J_q)_d)$).
\end{definition}

In what follows the integers $k$ and $\ell$ will always be the ones uniquely 
defined 
by the equality   
\[ 
d - 1 = k (q - 1) + \ell
\] 
with $0 \leq k \leq n - 1$ and $0 < \ell \leq q - 1$.

\begin{theorem} (\cite[Lemma 2.4, Lemma 2.5, Prop. 2.6 and Prop. 
2.7]{carv-neu-ieee}) 
\label{old-results} Let 
$f \in \fq[X_0,\ldots , X_n]_d$ be a nonzero polynomial, 
and let $S$ be it support, which we assume to be nonempty. Then:\\
i) if there exists a hyperplane $H \subset \mathbb{P}^n(\fq)$
such that $S \cap H = \emptyset$ and $|f| > W^{(1)}_{\prm}(n, d)$ then 
$|f| \geq W^{(2)}_{\RM}(n, d - 1)$; \\
ii) if  $|S| <
	\left(
	1 + \dfrac{1}{q}
	\right)
	(q  - \ell) q^{n-k-1} 
	$
then there  exists a hyperplane $H \subset \mathbb{P}^n(\fq)$  such that $S 
\cap H = 
\emptyset$; \\
iii) if $
|S| \le 
\left(1 + \dfrac{1}{(q-\ell)}  \right)
(q - \ell)q^{n-k-1} = (q - \ell + 1)q^{n-k-1}$ then there exists  $r \ge k$ and 
a linear subspace 
$H_r \subset \mathbb{P}^n(\fq)$ of dimension  $r$ such that $S \cap H_r = 
\emptyset$.
\end{theorem}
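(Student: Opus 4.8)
\noindent The plan is to prove the three assertions in the order (i), (iii), (ii), since the argument for (ii) will invoke (iii). For (i), I would first use the invariance of the weight distribution of $\prm(n,d)$ under the action of $\mathrm{PGL}_{n+1}(\fq)$ on $\mathbb{P}^n(\fq)$ to reduce to the case where $H$ is the hyperplane $\{X_0=0\}$. Write $f=X_0\,h+f_0$ with $h\in\fq[X_0,\ldots,X_n]_{d-1}$ and $f_0=f(0,X_1,\ldots,X_n)$. Since $S\cap H=\emptyset$, the form $f_0$ vanishes on all of $\fq^n$, hence $f_0\in I_q$. Put $\bar h=h(1,X_1,\ldots,X_n)$, so $\deg\bar h\le d-1$. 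The points of $\mathbb{P}^n(\fq)$ off $H$ are exactly the $(1:a_1:\cdots:a_n)$ with $(a_1,\ldots,a_n)\in\fq^n$, and there $f(1:a_1:\cdots:a_n)=\bar h(a_1,\ldots,a_n)+f_0(a_1,\ldots,a_n)=\bar h(a_1,\ldots,a_n)$; since all of $S$ lies off $H$, it follows that $|f|=|S|$ equals the weight of $\varphi(\bar h+I_q)$. This is a nonzero codeword of $\RM(n,d-1)$ whose weight $|f|$ exceeds $W^{(1)}_{\prm}(n,d)=W^{(1)}_{\RM}(n,d-1)$, so by the definition of the next-to-minimal weight we get $|f|\ge W^{(2)}_{\RM}(n,d-1)$.

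For (ii) and (iii) I would run one common geometric argument. Let $r$ be the largest integer for which some linear subspace of $\mathbb{P}^n(\fq)$ of dimension $r$ is disjoint from $S$; either hypothesis forces $|S|<N$, hence $r\ge 0$, and since $S\ne\emptyset$ we have $r\le n-1$. Fix $H_r$ of dimension $r$ with $H_r\cap S=\emptyset$. The linear subspaces of dimension $r+1$ containing $H_r$ correspond to the points of a projective space of dimension $n-r-1$, so there are $M'=q^{n-r-1}+\cdots+q+1$ of them; as each point of $\mathbb{P}^n(\fq)$ outside $H_r$ lies on exactly one of them, they partition $S$, so $\sum|S\cap H_{r+1}|=|S|$. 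By maximality of $r$, each such $H_{r+1}$ meets $S$, so $f|_{H_{r+1}}$ is a nonzero form of degree $d$ on $H_{r+1}\cong\mathbb{P}^{r+1}(\fq)$, and hence $|S\cap H_{r+1}|\ge W^{(1)}_{\prm}(r+1,d)$.

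For (iii), suppose for contradiction that $r\le k-1$. Then $d-1=k(q-1)+\ell\ge(r+1)(q-1)$, so $\prm(r+1,d)$ is the full code and we only obtain $|S\cap H_{r+1}|\ge 1$; summing gives $|S|\ge M'>q^{n-k}\ge(q-\ell+1)q^{n-k-1}$, contradicting the hypothesis. Hence $r\ge k$ and $H_r$ is a subspace of the required kind. For (ii), note that its hypothesis implies that of (iii), since $(1+\tfrac1q)(q-\ell)q^{n-k-1}<(q-\ell+1)q^{n-k-1}$; so by (iii) we already know $r\ge k$. Suppose for contradiction that $r\le n-2$. As $k\le r$ and $0<\ell\le q-1$, S\o rensen's relation and the formula for $W^{(1)}_{\RM}$ give $W^{(1)}_{\prm}(r+1,d)=W^{(1)}_{\RM}(r+1,d-1)=(q-\ell)q^{r-k}$, so
\[
|S|\ \ge\ M'\,(q-\ell)q^{r-k}\ =\ (q-\ell)\bigl(q^{n-k-1}+q^{n-k-2}+\cdots+q^{r-k}\bigr),
\]
and, since $r\le n-2$, the right-hand side contains the two largest terms $q^{n-k-1}$ and $q^{n-k-2}$; therefore $|S|\ge(q-\ell)\bigl(q^{n-k-1}+q^{n-k-2}\bigr)=\bigl(1+\tfrac1q\bigr)(q-\ell)q^{n-k-1}$, a contradiction. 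Hence $r=n-1$, and the maximising subspace is a hyperplane disjoint from $S$.

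The routine points to check are: that the restriction of $f$ to a linear subspace, when nonzero, is again homogeneous of degree exactly $d$ (it is the pullback of $f$ along a linear parametrisation); that the hypotheses really do yield $S\subsetneq\mathbb{P}^n(\fq)$ together with the elementary inequalities $M'>q^{n-k}$ (when $r\le k-1$) and $(1+\tfrac1q)(q-\ell)q^{n-k-1}<(q-\ell+1)q^{n-k-1}$; and that the formula $W^{(1)}_{\RM}(r+1,d-1)=(q-\ell)q^{r-k}$ is valid throughout the range $k\le r$, $0<\ell\le q-1$ (the boundary case $k=r$, $\ell=q-1$ returning the value $1$, consistent with $\prm(r+1,d)$ being full there). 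I expect the only genuine subtlety to be the bookkeeping of these degenerate regimes: checking that for (iii) the crude bound $|S\cap H_{r+1}|\ge 1$ already beats the hypothesis, and that for (ii) one must first secure $r\ge k$ — equivalently, that $\prm(r+1,d)$ is not the full code — before the minimum-distance input $(q-\ell)q^{r-k}$ can be used. Getting this order right, (iii) before (ii), is precisely what makes the estimates close.
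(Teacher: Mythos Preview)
The paper does not supply a proof of this theorem; it is quoted from the authors' earlier work \cite{carv-neu-ieee}, so there is nothing in the present paper to compare your argument against. Your proof is correct and is the natural one: dehomogenising along the missing hyperplane for (i), and for (ii)--(iii) taking $r$ maximal, fanning out the $(r{+}1)$-spaces through $H_r$, and bounding each slice from below by the relevant projective Reed--Muller minimum distance. Your ordering---establishing (iii) first so that $r\ge k$ is in hand before you invoke the formula $W^{(1)}_{\prm}(r+1,d)=(q-\ell)q^{r-k}$ in (ii)---and your treatment of the boundary regimes (in particular the case $k=r$, $\ell=q-1$, where the minimum distance degenerates to $1$) are both sound.
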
 

\section{Main results}

In this section we determine the next-to-minimal weight for most cases of  
projective Reed-Muller codes. Recall that we are assuming $q \geq 3$. We start 
by treating the case where $k = n - 1$.

%
%
\begin{proposition}\label{lema9}
If  $d - 1 = (n - 1) (q - 1) + \ell$, with $0 < \ell \leq q   - 
1$ then 
$$
W^{(2)}_{\prm}(n, d) = W^{(2)}_{\RM}(n, d - 1) = q - \ell + 1\, .
$$
\end{proposition}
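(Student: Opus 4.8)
The plan is to sandwich $W^{(2)}_{\prm}(n,d)$ between two bounds that, in the regime $k = n-1$, turn out to coincide: an upper bound coming from the inequality \eqref{2.1}, and a lower bound coming from the exact value of the minimum distance $W^{(1)}_{\prm}(n,d)$ together with the fact that Hamming weights are integers. Note that Theorem \ref{old-results} is not needed for this case.

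First I would read off $W^{(2)}_{\RM}(n,d-1)$ from \eqref{table-2ndwt}. Writing $d-1 = a(q-1)+b$ with $0 < b \le q-1$, the hypothesis $d-1 = (n-1)(q-1)+\ell$ forces $a = n-1$ and $b = \ell$, so we fall in the first line of the definition of $c$, namely $c = q$. Since $a = n-1$ we have $q^{\,n-a-1} = 1$ and $q^{\,n-a-2} = q^{-1}$, hence
\[
W^{(2)}_{\RM}(n,d-1) = W^{(1)}_{\RM}(n,d-1) + c\,q^{\,n-a-2} = (q-\ell)q^{\,n-a-1} + q\cdot q^{-1} = (q-\ell) + 1 .
\]
By \eqref{2.1} this yields $W^{(2)}_{\prm}(n,d) \le q-\ell+1$.

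For the reverse inequality I would invoke S\o rensen's equality $W^{(1)}_{\prm}(n,d) = W^{(1)}_{\RM}(n,d-1) = (q-\ell)q^{\,n-a-1} = q-\ell$, so the minimum distance of $\prm(n,d)$ equals $q-\ell$. By definition the next-to-minimal weight is the least Hamming weight of a nonzero codeword that is strictly larger than the minimum distance; since all these weights are integers, it is at least $q-\ell+1$. Combining the two bounds gives $W^{(2)}_{\prm}(n,d) = q-\ell+1 = W^{(2)}_{\RM}(n,d-1)$. The only point requiring any care is the correct identification of the relevant line of the table in \eqref{table-2ndwt}; conceptually the statement just records that when $k = n-1$ the minimum distance and the next-to-minimal weight of $\prm(n,d)$ are forced to be consecutive integers, so the elementary bound \eqref{2.1} determines the latter exactly.
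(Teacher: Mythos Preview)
Your argument is correct and is genuinely simpler than the paper's. The paper proves the lower bound $W^{(2)}_{\prm}(n,d)\ge W^{(2)}_{\RM}(n,d-1)$ by invoking Theorem~\ref{old-results}: given any $f$ with $0\neq|f|\le q-\ell+1=(q-\ell+1)q^{\,n-k-1}$ (here $k=n-1$), part~(iii) produces a hyperplane disjoint from the support of $f$, and then part~(i) forces $|f|$ to equal the minimum distance or to be at least $W^{(2)}_{\RM}(n,d-1)$. You bypass this machinery entirely by observing that in the case $k=n-1$ the two numbers $W^{(1)}_{\prm}(n,d)=q-\ell$ and $W^{(2)}_{\RM}(n,d-1)=q-\ell+1$ are \emph{consecutive integers}, so the mere fact that $W^{(2)}_{\prm}(n,d)$ is an integer strictly above the first and (by \eqref{2.1}) at most the second already pins it down. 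Your route is shorter and uses nothing beyond S\o rensen's formula and \eqref{2.1}; the paper's route, while heavier here, is the template that is actually needed in the later propositions where the gap between $W^{(1)}_{\prm}(n,d)$ and the target value exceeds~$1$ and integrality alone no longer suffices.
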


%
%
\begin{proof}
Let $f\in \fq[X_0,\ldots , X_n]_d$ be a nonzero polynomial such that 
$$
0 \neq |f| \le W^{(2)}_{\RM}(n, d - 1) = q - \ell + 1
\, 
$$
(such a polynomial exists because $W^{(1)}_{\prm}(n, d ) = W^{(1)}_{\RM}(n, d - 
1) = q - \ell$). Let $S$ be the support of $f$, from Theorem \ref{old-results} 
(iii) 
there exists a hyperplane $H$ such that $S \cap H = 
\emptyset$. From Theorem \ref{old-results} (i) we get that 
$|f|= W^{(1)}_{\prm}(n, d )$ or $|f| \ge W^{(2)}_{\RM}(n, d - 1)$, so that  
$W^{(2)}_{\prm}(n, d) \geq W^{(2)}_{\RM}(n, d - 1)$, and from inequality 
\eqref{2.1} we get $W^{(2)}_{\prm}(n, d) = W^{(2)}_{\RM}(n, d - 1)$.
\end{proof}
%
%

%
%
\noindent
Now we start to study the case where $k < n - 1$.

\begin{proposition}\label{prop32}
Let  $d - 1 = k (q - 1) + \ell$  be such that 
$k < n - 1$. If either 
$1 < \ell \le \dfrac{q+1}{2}$,  or 
$q=3$, $k > 0$ and $\ell=1$, we get
$$
W^{(2)}_{\prm}(n, d) = W^{(2)}_{\RM}(n, d - 1) = 
\left\{ 
\begin{matrix}
(q- 1)(q - \ell + 1)q^{n-k-2} 
      & \text{ if\ }\; 1 < \ell \le \dfrac{q+1}{2}\, , \\
 8 \cdot 3^{n-k-2}
    & \text{ if\ }\; q = 3, k > 0 \text{ and } \ell = 1\, .
\end{matrix}
\right.
$$

\end{proposition}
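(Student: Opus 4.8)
The plan is to mimic the proof of Proposition \ref{lema9}, but to supply the missing ingredient by hand: here Theorem \ref{old-results} (iii) is not enough to conclude that the support lies off a \emph{hyperplane}, so I must upgrade ``off a linear subspace $H_r$ of dimension $r \ge k$'' to the statement I really need. First I would fix a nonzero $f \in \fq[X_0,\ldots,X_n]_d$ with $0 \ne |f| \le W^{(2)}_{\RM}(n,d-1)$ (such $f$ exists by \eqref{2.1} together with the existence of a minimal-weight word, since $W^{(1)}_{\prm}(n,d) < W^{(2)}_{\RM}(n,d-1)$), let $S$ be its support, and compute $W^{(2)}_{\RM}(n,d-1)$ explicitly from \eqref{table-2ndwt}: in the range $1 < \ell \le (q+1)/2$ with $k < n-1$ one has $c = \ell - 1$, giving $W^{(2)}_{\RM}(n,d-1) = (q-\ell)q^{n-k-1} + (\ell-1)q^{n-k-2} = (q-1)(q-\ell+1)q^{n-k-2}$; and for $q=3$, $0<k=n-2$ or more generally $q=3$, $0<k<n-1$, $\ell=1$ the table entry gives $c=q-1=2$, hence $W^{(2)}_{\RM}(n,d-1) = 2\cdot 3^{n-k-1} + 2\cdot 3^{n-k-2} = 8\cdot 3^{n-k-2}$. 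So in every case under consideration $|S| \le (q-\ell+1)q^{n-k-1}$, which is exactly the hypothesis of Theorem \ref{old-results} (iii).

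Next I would exploit the subspace $H_r$, of dimension $r \ge k$, produced by Theorem \ref{old-results} (iii) with $S \cap H_r = \emptyset$. If $r = n-1$ we are done immediately by Theorem \ref{old-results} (i), which yields $|f| = W^{(1)}_{\prm}(n,d)$ or $|f| \ge W^{(2)}_{\RM}(n,d-1)$, and then \eqref{2.1} closes the argument as in Proposition \ref{lema9}. The work is therefore to handle $k \le r < n-1$. The natural move is a projection/induction argument: choose coordinates so that $H_r = V(X_0,\ldots,X_{n-r-1})$, i.e.\ $H_r$ is cut out by $m := n-r$ of the variables, and consider the linear projection $\pi \colon \mathbb{P}^n(\fq) \dashrightarrow \mathbb{P}^{m-1}(\fq)$ away from $H_r$, whose fibers are the $r$-planes through $H_r$. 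Since $S$ misses $H_r$, each point of $S$ lands in a well-defined fiber, and an averaging / counting argument over the $N_{m-1} = q^{m-1}+\cdots+1$ fibers combined with the bound on $|S|$ should produce a fiber $\Pi \cong \mathbb{P}^r(\fq)$ (an $r$-plane containing $H_r$) meeting $S$ in relatively few points; restricting $f$ to such a $\Pi$ and using that $\Pi \cap S$ is still disjoint from $H_r \subset \Pi$, one can either invoke the affine picture (on $\Pi \setminus H_r \cong \mathbb{A}^r(\fq)$, where $f$ restricts to a degree-$\le d-1$ affine polynomial whose nonzero set has size $< W^{(1)}_{\RM}(r,d-1)$, forcing it to be zero) or iterate. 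Carrying this out carefully will show that either $f$ vanishes identically on every such fiber (whence $|f|=0$, a contradiction) or one can enlarge the empty-intersection subspace step by step until it is a hyperplane, at which point Theorem \ref{old-results} (i) applies.

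The main obstacle, and the place I expect the real content to sit, is precisely this reduction from ``$S$ avoids some $H_r$ with $r \ge k$'' to ``$S$ avoids a hyperplane,'' because the bound $|S| \le (q-\ell+1)q^{n-k-1}$ is tight and the counting over fibers is delicate: one needs the correct interplay between the degree parameter $d-1 = k(q-1)+\ell$, the dimension $r$, and S\o rensen's minimum-distance formula $W^{(1)}_{\RM}(r,d-1)$ to guarantee that a sufficiently ``light'' fiber exists and that $f$ restricted to it is forced to be zero on the corresponding affine chart. The special case $q=3$, $\ell=1$ (and the boundary $\ell=(q+1)/2$) will likely require separate bookkeeping because the value $c$ in \eqref{table-2ndwt} jumps, so the numerical slack in the counting is smallest there. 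Once a hyperplane $H$ with $S \cap H = \emptyset$ is found, the conclusion $W^{(2)}_{\prm}(n,d) \ge W^{(2)}_{\RM}(n,d-1)$ follows from Theorem \ref{old-results} (i) exactly as in Proposition \ref{lema9}, and equality then follows from \eqref{2.1}.
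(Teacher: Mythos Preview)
You have overlooked the much simpler route via part (ii) of Theorem \ref{old-results}. The whole point of the hypotheses in this proposition is that they are precisely the cases in which the constant $c$ from \eqref{table-2ndwt} satisfies $c \le q-\ell$, equivalently $\dfrac{c}{(q-\ell)q} \le \dfrac{1}{q}$: when $1 < \ell \le (q+1)/2$ one has $c = \ell - 1 \le q-\ell$, and when $q=3$, $k>0$, $\ell=1$ one has $c = q-1 = 2 = q-\ell$. This gives $W^{(2)}_{\RM}(n,d-1) \le \bigl(1+\tfrac{1}{q}\bigr)(q-\ell)q^{n-k-1}$. So if one assumes, by way of contradiction, that there exists $f$ with $W^{(1)}_{\prm}(n,d) < |f| < W^{(2)}_{\RM}(n,d-1)$, then $|S| < \bigl(1+\tfrac{1}{q}\bigr)(q-\ell)q^{n-k-1}$ and part (ii) hands you a \emph{hyperplane} disjoint from $S$ immediately; part (i) then yields the contradiction. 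That is the paper's entire proof.

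Your plan instead invokes part (iii), which only produces a subspace $H_r$ of dimension $r \ge k$, and then proposes an elaborate projection/averaging scheme to promote $H_r$ to a hyperplane. This is not wrong in spirit, but it is entirely unnecessary here, and you yourself flag the promotion step as ``delicate'' without actually carrying it out---so as written the proposal has a genuine gap at exactly the place where the paper needs no argument at all. The lesson is that the case split in the statement is not arbitrary: it is engineered so that the strict inequality $|f| < W^{(2)}_{\RM}(n,d-1)$ already places $|S|$ below the threshold of part (ii). Part (iii) (and harder ad hoc arguments) only become relevant in the remaining cases, such as Proposition \ref{prop33}.
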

%
%
\begin{proof}
From \eqref{table-2ndwt} we 
get that
$$
W^{(2)}_{\RM}(n, d - 1) 
 =  \left(1 + \dfrac{c}{(q-\ell)q}  \right)
(q - \ell)q^{n-k-1} \, , \
$$
where, when $q = 3$, $0 < k \leq n - 2$  and $\ell = 1$ we have $c = q - 1 = 2$ 
(and a fortiori 
$\dfrac{c}{(q-\ell)q}= \dfrac{1}{q}$), and when $q \geq 3$ 
and $1 < \ell \leq 
(q + 1)/2$ 
we have
$c = \ell - 1$
(and a fortiori 
$\dfrac{c}{(q-\ell)q} \leq \dfrac{1}{q}$).
Assume, by means of absurd, that there exists  $f\in \fq[X_0,\ldots , X_n]_d$ 
such that 
$$
W^{(1)}_{\prm}(n, d ) < |f| < W^{(2)}_{\RM}(n, d - 1)\leq \left(1 + 
\dfrac{1}{q}  
\right)
(q - \ell)q^{n-k-1}
\,  
$$
and let $S$ be the support of $f$.  From Theorem \ref{old-results} (ii) and (i) 
we get 
$|f| \geq W^{(2)}_{\RM}(n, d - 1)$, a contradiction. Hence 
$W^{(2)}_{\prm}(n, d )  \geq W^{(2)}_{\RM}(n, d - 1)$ and a fortiori 
$W^{(2)}_{\prm}(n, d ) = W^{(2)}_{\RM}(n, d - 1)$.
\end{proof}

Next we treat the case where $n \ge 3$,
$0 \le k < n - 2$ and $\ell =1$. We observe from \eqref{table-2ndwt} 
that $(q^2-1)q^{n-k-2} \leq W^{(2)}_{\RM}(n, d - 1)$ (actually we have equality 
when $q = 3$ and an strict inequality when $q > 3$).

\begin{proposition}\label{prop33}
Let  $n \ge 3$,
$0 \le k < n - 2$ and $\ell =1$,  then
$$
W_{\prm}^{(2)} (n,d) = 
\left(1 + \frac{1}{q} \right) (q-1)q^{n-k-1} = 
(q^2-1)q^{n-k-2}\, .
$$
\end{proposition}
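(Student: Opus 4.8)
The plan is to prove the inequalities $W^{(2)}_{\prm}(n,d)\ge (q^2-1)q^{n-k-2}$ and $W^{(2)}_{\prm}(n,d)\le (q^2-1)q^{n-k-2}$ in turn. Since $\ell=1$ one has $\left(1+\tfrac1q\right)(q-\ell)q^{n-k-1}=(q-1)q^{n-k-1}+(q-1)q^{n-k-2}=(q^2-1)q^{n-k-2}$, so the threshold in Theorem \ref{old-results}(ii) is exactly $(q^2-1)q^{n-k-2}$. For the lower bound I would argue as in the proofs of Propositions \ref{lema9} and \ref{prop32}: were there $f\in\fq[X_0,\ldots,X_n]_d$ with $W^{(1)}_{\prm}(n,d)<|f|<(q^2-1)q^{n-k-2}$, Theorem \ref{old-results}(ii) would give a hyperplane disjoint from its support, hence Theorem \ref{old-results}(i) would give $|f|\ge W^{(2)}_{\RM}(n,d-1)$, and the inequality $(q^2-1)q^{n-k-2}\le W^{(2)}_{\RM}(n,d-1)$ noted before the statement would force $|f|\ge (q^2-1)q^{n-k-2}$, a contradiction. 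Thus $W^{(2)}_{\prm}(n,d)\ge (q^2-1)q^{n-k-2}$.

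For the upper bound I must exhibit a codeword of weight exactly $(q^2-1)q^{n-k-2}$; as $q^2-1>q^2-q$ this already exceeds $W^{(1)}_{\prm}(n,d)=(q-1)q^{n-k-1}$, so any such codeword gives $W^{(2)}_{\prm}(n,d)\le (q^2-1)q^{n-k-2}$. The key device is a coning remark: if $F\in\fq[X_1,\ldots,X_n]_d$ represents a codeword of $\prm(n-1,d)$ of weight $w$, then, viewed in $\fq[X_0,\ldots,X_n]_d$, it represents a codeword of $\prm(n,d)$ of weight $qw$ (its support is the $w$ points of $\{X_0=0\}\cong\mathbb{P}^{n-1}(\fq)$ where $F\ne0$, together with the $(q-1)w$ points $[1:x_1:\cdots:x_n]$ with $F(x_1,\ldots,x_n)\ne0$, since $\{x\in\fq^n:F(x)\ne0\}$ is a union of $w$ punctured lines through the origin). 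With this in hand, the case $k=0$ is immediate: for $n\ge3$ the hyperbolic quadratic form $f=X_0X_1+X_2X_3$ has $q^3+q^2-q$ zeros in $\fq^4$, and counting its zeros in $\mathbb{P}^n(\fq)$ gives $|f|=(q^2-1)q^{n-2}$. For $k\ge1$ I would induct on $n$: applying the Proposition to $n-1$ (valid whenever $k<(n-1)-2$) yields a codeword of $\prm(n-1,d)$ of weight $(q^2-1)q^{(n-1)-k-2}$, which cones up to one of $\prm(n,d)$ of weight $(q^2-1)q^{n-k-2}$. This leaves the base $n=k+3$, where one more coning step reduces the problem to producing a codeword of $\prm(k+2,d)$ of weight $q^2-1$ (note $q(q^2-1)=(q^2-1)q^{(k+3)-k-2}$). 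For $q=3$ this weight is $W^{(2)}_{\prm}(k+2,d)=8\cdot 3^{(k+2)-k-2}$, supplied by Proposition \ref{prop32}; indeed for $q=3$ and $0<k<n-2$ the induction can be bypassed altogether, since then $(q^2-1)q^{n-k-2}=W^{(2)}_{\RM}(n,d-1)$ and homogenizing a next-to-minimal weight codeword of $\RM(n,d-1)$ to degree $d$ produces a codeword of $\prm(n,d)$ of the same weight. For $q\ge4$ the required codeword of $\prm(k+2,d)$ (the case $k=n-2$, $\ell=1$) has to be built by hand.

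I expect that last construction to be the main obstacle. It is forced to be ``genuinely projective'': $q^2-1$ lies strictly between the first weight $(q-1)q$ and, for $q\ge4$, the second weight $q^2$ of $\RM(k+2,d-1)$, so a codeword of $\prm(k+2,d)$ of that weight cannot be divisible by $X_0$ (nor, more generally, be the homogenization of an affine codeword of degree $<d$), and the same comment applies to $(q^2-1)q^{n-k-2}$ and $\RM(n,d-1)$ when $q\ge4$. This is exactly why the naive candidates fail: a factor cutting the support down to a low-dimensional subspace is either not divisible by $X_0$, and then contributes too many zeros at infinity, or is divisible by $X_0$, and then the affine part collapses to the minimum weight. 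Pinning down the correct codeword in this boundary case is the heart of the matter; the two bounds above then combine to give the stated equality.
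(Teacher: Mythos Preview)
Your lower bound is exactly the paper's argument, and your $k=0$ construction $X_0X_1+X_2X_3$ is (up to relabelling) the paper's as well, so for $k=0$ and for $q=3$ your proof is complete and matches the paper.

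The gap is the upper bound when $q\ge 4$ and $k\ge 1$. Your coning induction on $n$ is correct as far as it goes, but it bottoms out at needing a codeword of $\prm(k+2,d)$ of weight $q^2-1$, i.e.\ precisely the case $k=n-2$, $\ell=1$ of the projective next-to-minimal weight. You acknowledge this yourself, and you are right that it is not a formality: that case is left open in the paper (it is the ``???'' entry in Table~\ref{pesonovo}), and if in fact $W^{(2)}_{\prm}(k+2,d)=q^2$ there, then no codeword of weight $q^2-1$ exists and the coning strategy simply cannot close. So the reduction leads to a problem at least as hard as the one you set out to solve.

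The paper sidesteps this entirely with a single explicit polynomial valid for all $0\le k<n-2$:
\[
f=X_1X_{k+3}\prod_{i=2}^{k+1}\bigl(X_i^{q-1}-X_1^{q-1}\bigr)+X_0X_{k+2}\prod_{i=2}^{k+1}\bigl(X_i^{q-1}-X_0^{q-1}\bigr),
\]
which specialises to $X_1X_3+X_0X_2$ when $k=0$. A short case split on $\alpha_0=0$ versus $\alpha_0=1$ shows $|f|=(q-1)q^{n-k-2}+(q-1)q^{n-k-1}=(q^2-1)q^{n-k-2}$. Note that this polynomial is \emph{not} a cone over fewer variables and is not a product of linear forms, which is consistent with your observation that the required codeword must be ``genuinely projective''; the point is that it can be written down directly in $\mathbb{P}^n$ without ever passing through the unresolved $k=n-2$ case.
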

\begin{proof}
In case $1 \leq k < n - 2$ \/ let  
$f=X_1X_{k+3}g + X_0X_{k+2}h$, where
$$
g = \prod_{i=2}^{k+1} \left( X_{i}^{q-1} - X_{1}^{q-1} \right)
\quad \text{and} \quad 
h = \prod_{i=2}^{k+1} \left( X_{i}^{q-1} - X_{0}^{q-1} \right)
\,, 
$$
 or let $f = X_1X_{3} + X_0X_{2}$ in case $k = 0$. 
We claim that $ | f | = (q^2-1)q^{n-k-2}$, and
let $\balpha = (\alpha_0: \ldots : \alpha_n)$ be a point in the support of $f$.
If $\alpha_0 = 0$ then we may take $\alpha_1 = 1$, and we have $f(\balpha) = 
\alpha_{k + 3} \prod_{i = 2}^{k + 1} (\alpha_i^{q - 1} - 1)$ (or $f(\balpha) = 
\alpha_3$ in case $k = 0$). Thus we must have $\alpha_{k+3} \in \fq^*$ and 
$\alpha_2 = \cdots = \alpha_{k + 1} = 0$, so
we get $(q - 1)q^{n - k - 2}$ points of this form in the support. If 
$\alpha_0 = 1$ then 
\[
f(\balpha) = \alpha_1 \alpha_{k + 3} \prod_{i = 2}^{k + 1} 
(\alpha_i^{q - 1} - \alpha_1^{q - 1}) + \alpha_{k + 2} \prod_{i = 2}^{k + 1} 
(\alpha_i^{q - 1} - 1)
\]
in case $1 \leq k < n - 2$, or $f(\balpha) = \alpha_1 \alpha_{3} + \alpha_2$ 
in case $k= 0$. We see that for any $\alpha_1 \in \fq$ we get 
$f(\balpha) = (\alpha_1 \alpha_{k + 3}+ \alpha_{k + 2}) \prod_{i = 2}^{k + 1} 
(\alpha_i^{q - 1} - 1)$, thus we must have $\alpha_{k + 2} \neq -\alpha_1 
\alpha_{k + 3}$ and  $\alpha_2 = \cdots = \alpha_{k + 1} = 0$, so we get $(q - 
1) q^{n - k - 1}$ points of this form in the support, and we get
\[
| f | = (q - 1)q^{n - k - 2} + (q - 
1) q^{n - k - 1} = (q^2-1)q^{n-k-2},
\]
this proves that $W_{\prm}^{(2)} (n,d) \leq (q^2-1)q^{n-k-2}$.   
Assume, by means of absurd, that there exists  $a\in \fq[X_0,\ldots , X_n]_d$ 
such that 
$$
W^{(1)}_{\prm}(n, d ) < |a| < (q^2-1)q^{n-k-2} = \left(1 + 
\dfrac{1}{q}  
\right)
(q - 1)q^{n-k-1} \leq W^{(2)}_{\RM}(n, d - 1)
\,  
$$
and let $S$ be the support of $a$.  From Theorem \ref{old-results} (ii) and (i) 
we get 
$|a| \geq W^{(2)}_{\RM}(n, d - 1)$, a contradiction, and we conclude that   
$W^{(2)}_{\prm}(n, d ) = (q^2-1)q^{n-k-2}$.
\end{proof}

The above proof shows that the next-to-minimal weight of projective Reed-Muller 
codes is not, in all cases, attained only by evaluating polynomials that 
split completely as  a product of degree one factors. For example, in the case 
$k = 0$ of the above result (hence $n = 2$ and $d = 2$) the next-to-minimal 
weight is attained from the  evaluation of an irreducible quadric. This is in 
contrast with what happens with polynomials that attain the minimum distance of 
$\RM(n, d)$ and $\prm(n, d)$ which must be product of degree one polynomials 
(see \cite{delsarte-et-al} and \cite{rolland2} respectively). Also, in 
\cite{leduc}, the author proves that  
 codewords of next-to-minimal weight in $\RM(n, d)$, when $q \geq 
3$, always come from the evaluation of polynomials which are products of degree 
one polynomials. 

The following result deals with the case where $n = 2$ and $d = 2$ (so that $k 
= 0$ and $\ell = 1$).

\begin{proposition}
Let $n = 2$ and $d=2$ then
$$
W^{(2)}_{\prm}(2,2) = q^2 .
$$
\end{proposition}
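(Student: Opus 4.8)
The plan is to pin down \emph{all} the nonzero weights of $\prm(2,2)$: I claim they are exactly $q^2-q$, $q^2$ and $q^2+q$. Granting this, the proposition follows immediately, because $W^{(1)}_{\prm}(2,2)=W^{(1)}_{\RM}(2,1)=q^2-q$, while \eqref{2.1} together with \eqref{table-2ndwt}---applied with $d-1=1=0\cdot(q-1)+1$, so that $a=0$, $b=1$, $c=q$ and $W^{(2)}_{\RM}(2,1)=(q-1)q+q\cdot q^{0}=q^2$---gives $q^2-q<W^{(2)}_{\prm}(2,2)\le q^2$; since $W^{(2)}_{\prm}(2,2)$ must be one of the three weights above, it equals $q^2$.

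To prove the claim, observe first that since $q\ge 3$ the generators $X_j^qX_i-X_i^qX_j$ of $J_q$ have degree $q+1>2$, so $(J_q)_2=\{0\}$; hence every nonzero codeword of $\prm(2,2)$ is $\psi(f+(J_q)_2)$ for a unique nonzero $f\in\fq[X_0,X_1,X_2]_2$, with weight $|f|=N-|Z|$, where $N=q^2+q+1$ and $Z\subset\mathbb{P}^2(\fq)$ is the set of $\fq$-rational points of the conic $\{f=0\}$. It is classical that $|Z|\in\{1,\,q+1,\,2q+1\}$: if $f$ is a product of two $\fq$-linear forms then $Z$ is a line or a union of two distinct lines, so $|Z|\in\{q+1,\,2q+1\}$; if $f$ is a product of two conjugate linear forms over $\mathbb{F}_{q^2}$ (equivalently, $f$ is irreducible over $\fq$ but not over $\overline{\fq}$) then $Z$ is their unique $\fq$-rational common point, so $|Z|=1$; and if $f$ is absolutely irreducible then $\{f=0\}$ is a smooth conic, which has an $\fq$-rational point because every ternary quadratic form over $\fq$ is isotropic, and projection from such a point gives a bijection $Z\to\mathbb{P}^1(\fq)$, so $|Z|=q+1$. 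Therefore $|f|\in\{q^2+q,\,q^2,\,q^2-q\}$, which is the claim.

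Finally, I would point out how much of this is already forced by the tools of the previous propositions, and where the real difficulty lies. Arguing exactly as in Propositions~\ref{prop32} and~\ref{prop33}, via Theorem~\ref{old-results}(ii) and then~(i) with the threshold $(1+1/q)(q-1)q=q^2-1$, one sees that no codeword has weight $|f|$ with $q^2-q<|f|\le q^2-2$. So the only weight that genuinely needs to be excluded by hand is $|f|=q^2-1$, i.e.\ the case $|Z|=q+2$, and this is exactly the point at which one cannot avoid counting $\fq$-points of conics. For $q$ odd the exclusion is short: if $|Z|=q+2$ then no line lies in $Z$ (otherwise its linear form would divide $f$ and $|Z|$ would be $q+1$ or $2q+1$), so $f$ is irreducible and every line meets $Z$ in at most two points; a routine incidence count then shows $Z$ has no tangent line, and counting secants through a point off $Z$ forces $q+2$ to be even, a contradiction. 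The main obstacle is the characteristic-$2$ case, where one must additionally rule out that $Z$ is a hyperoval; the cleanest fix is to invoke, as above, the classical determination of the number of $\fq$-points of a plane conic.
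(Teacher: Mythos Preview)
Your proof is correct, and it takes a genuinely different route from the paper's. The paper first uses Theorem~\ref{old-results}~(ii) and~(i) exactly as you indicate in your final paragraph to reduce to showing that no codeword has weight $q^2-1$; it then finishes not by quoting the classification of plane conics but by a hands-on normal-form argument: it picks a point $P\notin S$, shows that among the $q+1$ lines through $P$ at most one can meet $S$ in $q$ points (the rest meet $S$ in $q-1$), normalizes so that $f=X_0(a_0X_0+a_1X_1+a_2X_2)+X_1X_2$, and then counts directly that $|f|\in\{(q-1)q,\,q^2\}$. Your approach instead bypasses Theorem~\ref{old-results} entirely in the main argument and determines the full list of nonzero weights $\{q^2-q,\,q^2,\,q^2+q\}$ from the classical trichotomy for conics over $\fq$ (pair of rational lines / pair of conjugate lines / smooth conic), which is cleaner and yields strictly more information. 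What the paper's argument buys is self-containment: it does not need to import the point count for smooth conics (via Chevalley--Warning and projection), and in particular it handles the even-$q$ case---where your incidence-count sketch in the last paragraph would run into hyperovals---without any extra input.
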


%
%
\begin{proof}
Let $g \in \fq[X_0, X_1 , X_2]_2$ be such that  
\[
W^{(1)}_{\prm}(2,2) < | g | < 
(1 + \frac{1}{q}) 
W^{(1)}_{\prm}(2,2) = q^2 - 1.
\]
From Theorem \ref{old-results} (ii) and (i) 
we get 
$|g| \geq W^{(2)}_{\RM}(2, 1) = q^2$, a contradiction, hence from $\eqref{2.1}$ 
we get 
$W^{(2)}_{\prm}(2,2) \in \{q^2 - 1, q^2\}$. 
Let $f \in \fq[X_0, X_1 , X_2]_2$ be such that  
$| f | \in \{q^2 - 1, q^2\}$ and let $S$ be its support.  From Theorem 
\ref{old-results} (i) if there 
exists a line in $\mathbb{P}^2(\fq)$ not intersecting $S$ 
then $| f | = q^2$, so let's assume that there is no such line. After a 
projective transformation we may assume that $P = (0:0:1)$ is not in $S$, and 
let $L_0, \ldots, L_q$ be the lines that contain $P$. Let $i \in \{0, \ldots, 
q\}$,  since $\deg(f) = 2$ we have $| S \cap L_i | \geq q - 1$ 
and from $| S | \in \{ q^2 - 1, q^2\}$ we must have at most one line 
intersecting $S$ in $q$ points. After a projective transformation that fixes 
$P$ we may assume that the line $X_0 = 0$ intersects $S$ in $q - 1$ points, and 
that $P$ and $Q = (0:1:0)$ are the points of the line missing $S$, so we may 
write $f= X_0 (a_0 X_0 + a_1 X_1 + a_2 X_2) + X_1 X_2$. Observe that there are 
$q - 1$ points of the form $(0:1:\alpha)$ in the support of $f$. Let $\alpha, 
\beta \in \fq$, then $f(1:\alpha:\beta) = a_0 + a_1 \alpha + (a_2 + 
\alpha)\beta$ and for each $\alpha \neq - a_2$ we have $q - 1$ values for 
$\beta$ such that  $f(1:\alpha:\beta) \neq 0$. On the other hand, if 
$\alpha = - a_2$ then either $f(1: - a_2:\beta) = 0$ or $f(1: - a_2:\beta) \neq 
0$ for $\beta \in \fq$, so that  $|f| = (q-1) + (q-1)^2 = (q-1)q$ or $|f| = 
(q-1) + (q-1)^2 + q = q^2$, which proves $|f|  = q^2$. 
\end{proof}

We summarize the results for $W_{\prm}^{(2)} (n,d)$ obtained in  
\cite{carv-neu-ieee} and in this paper in the 
following tables, where we also list the corresponding values  
of $W_{\RM}^{(2)} (n,d-1)$ for comparison.\\ \\

\begin{table}[htbp]
	\centering
	\begin{tabular}{ccccc}\hline\hline
		$n$ & $k$   & $\ell$ 
		& $W_{\RM}^{(2)} (n,d-1)$ & $W_{\prm}^{(2)} (n,d)$ 
		 \\
		\hline\hline
		$n \ge 3 $    & $k=0$ & $\ell =1$
		&	$2^n$	 & $3 \cdot 2^{n-2}$  \\
		$n\ge 4$ & $1 \le k < n-2$ & $\ell =1$
		&	$3\cdot 2^{n-k-2}$	 & $3 \cdot 2^{n-k-2}$\\
		$n\ge 2$ & $k=n-2$ & $\ell =1$
		&	$4$	 & $4$  \\
        $n\ge 2$ & $k=n-1$ & $\ell =1$
        		&	$2$	 & $2$  \\
		\hline\hline
	\end{tabular}
	\caption{Next-to-minimal weights for $\RM(n, d)$ and $\prm(n, d)$ when  
	$n\ge 2$ and $q=2$}
	\label{pesoq2}
\end{table}

\begin{table}[htbp]
	\centering
	\begin{tabular}{ccccc}\hline\hline
		$n$ & $k$   & $\ell$ 
		& $W_{\RM}^{(2)} (n,d-1)$ & $W_{\prm}^{(2)} (n,d)$ \\
		\hline\hline
		$n=2$ & $k=0$ & $\ell =1$
		&	$3^2$	 & $3^2$   \\
		$n \ge 3$ & $k=0$ & $\ell =1$
		&	$3^n$	 & $8 \cdot 3^{n-2}$   \\
		$n \ge 3$ & $1 \le k \le n-2$ & $\ell =1$
		&	$8\cdot 3^{n-k-2}$	 & $8\cdot 3^{n-k-2}$ \\
		$n \ge 2$ &  $0 \le k \le n-2$ & $\ell =2$
		&	$4\cdot 3^{n-k-2}$	 & $4\cdot 3^{n-k-2}$ \\
		$n \ge 1$ &  $k=n-1$ & $\ell =1, 2$
		&	$4 - \ell$	 &  $4 - \ell$ \\
		\hline\hline
	\end{tabular}
	\caption{Next-to-minimal weights for $\RM(n, d)$ and $\prm(n, d)$ when  
	$n\ge 1$ and $q=3$}
	\label{pesoq3}
\end{table}

\begin{table}[htbp]
	\centering
	\begin{tabular}{ccccc}\hline\hline
		$n$ & $k$   & $\ell$ 
		& $W_{\RM}^{(2)} (n,d-1)$ & $W_{\prm}^{(2)} (n,d)$ \\
		\hline\hline
		$n=2$ & $k=0$ & $\ell =1$
		&	$q^2$	 & $q^2$ \\
		$n\ge 3$ & $k < n-2$ & $\ell =1$
		&	$q^{n-k}$	 & $q^{n-k} - q^{n-k-2}$\\
		$n \ge 3$ & $k = n -2$ & $\ell =1$
		&	$q^2$	 & ??? \\
		$n \ge 2$ & $k \le n -2$ & $1 < \ell \le \frac{q+1}{2}$
		&	$(q-1)(q-\ell 
		+ 1)q^{n-k-2}$	 &  $(q-1)(q-\ell + 1)q^{n-k-2}$ \\
		$n \ge 2$ & $k \le n -2$ & $\frac{q+1}{2} < \ell \leq q -1 $ 
		&	$(q-1)(q-\ell + 1)q^{n-k-2}$	 & ???  \\
		$n \ge 1$ & $k = n -1$ & $1 \leq \ell \leq q-1$
		&	$q-\ell + 1$	 &   $q-\ell + 1$ \\
		\hline\hline
	\end{tabular}
	\caption{Next-to-minimal weights for $\RM(n, d)$ and $\prm(n, d)$ when  
	$n\ge 1$ and $q \geq 4$}
	\label{pesonovo}
\end{table} 
\newpage
\     

\bibliographystyle{plain}

\end{document}